\documentclass[letterpaper,10pt,conference]{ieeeconf}

\IEEEoverridecommandlockouts
\usepackage{amsmath,amssymb,mathtools}
\usepackage{booktabs}
\usepackage{cite}
\usepackage{graphicx}
\usepackage{xcolor}
\usepackage{url}

\newtheorem{lemma}{Lemma}
\newtheorem{theorem}{Theorem}
\newtheorem{corollary}{Corollary}
\newtheorem{assumption}{Assumption}

\newcommand{\hcert}{h_{\mathrm c}}
\newcommand{\hsat}{h_{\mathrm{sat}}}
\newcommand{\dexact}{d^{\star}}
\newcommand{\qhat}{\widehat q}

\newcommand{\Oreal}{O^{\star}}
\newcommand{\Opred}{\widehat O}

\title{Barrier Function Conformal Safety Clearance Certification with CVaR for Driving Trajectory Selection}
\author{Pei Yu Chang$^{1}$ and Qadeer Ahmed$^{1}$
\thanks{*This research was supported by the CARMEN+ University Transportation Center, sponsored by the U.S. Department of Transportation under Grant No. 69A3552348327. The views presented are those of the authors and do not necessarily represent the official views of the U.S. Department of Transportation.}
\thanks{$^{1}$Pei Yu Chang and Qadeer Ahmed are the Department of Mechanical and Aerospace Engineering,The Ohio State University, Columbus, OH 43212 USA
        {\tt\small chang.2314, ahmed.358@osu.edu}}%
}

\begin{document}
\maketitle

\begin{abstract}

Autonomous driving motion planners generate and select candidate trajectories while accounting for interactions with surrounding agents. However, these evaluations do not certify the actual safety clearance of the selected trajectory. The framework evaluates the trajectory selected by ant planners and calibrates the gap between its plan time margin and realized safety clearance. A differentiable separating axis barrier margin deterministically lower bounds exact signed oriented-bounding-box (OBB) safety clearance, connecting the statistical certificate to safety margin. At plan time, the margin is evaluated using either a nominal prediction and sampled lower tail Conditional Value-at-Risk (CVaR), while post-selection conformal calibration over exchangeable drive sessions absorbs prediction and sampling errors. Conformal calibration provides statistical validity independently of predictor correctness. The method is evaluated on a frozen 300 session nuPlan study using native Predictive Driver Model (PDM) Closed loop proposals. At 10\% target miscoverage, sampled lower CVaR reduces the conformal correction from 1.43~m to 0.03~m and increases the rate of nonnegative safety clearance certificates from 68.7\% to 87.3\%. Across all evaluated statistics, exact-clearance coverage remains above the 90\% target at 93.3--96.7\%.

\end{abstract}

\section{Introduction and Related Work}

Autonomous driving in complex environments rarely uses a single trajectory, as ego behavior depends on uncertain interactions with surrounding agents. Motion planning has been a central component of autonomous driving, while recent advances in learned decision making and end-to-end driving have broadened trajectories are generated and evaluated. PDM-Closed builds route-aligned lateral paths and
scores them under longitudinal IDM policies~\cite{dauner2023pdm}; learned
planners emit several trajectory modes~\cite{fan2025riskaware,
zhang2025carplanner}; hybrid and end-to-end systems retain the same two-stage
shape while replacing parts of it with learned
components~\cite{fan2025sahdrivescenarioawarehybridplanner, song2025momad,
lin2025modelbasedpolicyadaptationclosedloop}.

Planners differ mainly in how candidates are scored and one trajectory is
chosen. PDM-Closed simulates each proposal and combines collision penalties with time-to-collision (TTC), route progress, and comfort~\cite{dauner2023pdm}. CarPlanner learns mode scores by reinforcement learning, then reranks with explicit safety and progress criteria~\cite{zhang2025carplanner}. Model-Based Policy Adaptation replaces the hand-designed score with a learned multi-step $Q$-value~\cite{lin2025modelbasedpolicyadaptationclosedloop}, and SAH-Drive arbitrates between rule-based and learned proposals without a common safety metric~\cite{fan2025sahdrivescenarioawarehybridplanner}. These quantities support trajectory ranking, but they do not directly certify the realized safety clearance of the selected trajectory.

Control barrier functions (CBFs) take a different route, defining safe sets and
enforcing forward invariance through constraints on the
dynamics~\cite{ALLAMAA2024392, chen2025dynamichighordercontrolbarrier,
RENGANATHAN2025472}. The guarantee is strong, but it is stated with respect to
assumed dynamics and to the environment representation available when the
constraint is imposed.
Surrounding vehicles behave in ways that are neither deterministic nor
unimodal, which makes a single nominal rollout an optimistic basis for
planning. Probabilistic formulations propagate state and sensing uncertainty into
the safety constraint, and tail-risk measures such as Conditional Value-at-Risk
(CVaR) weight the adverse part of the distribution~\cite{chang2026riskbudgeted,
CHANG2026107114}. Planners built this way evaluate a candidate against a
distribution of interactions~\cite{Mustafa2025riskaware}. However, a lower-tail statistic remains the statistic of the predicted distribution without certifying what is realized.

Beyond incorporating uncertainty into trajectory evaluation, recent work
verifies planned trajectories explicitly. Reachability methods
certify avoidance by separating the ego trajectory from the reachable regions
of surrounding agents~\cite{raeesi2025collision}. Conformal prediction supplies
finite sample, distribution free guarantees without assuming the predictor is
correct~\cite{vovk2005algorithmic}, and has been used
to build prediction regions and adapt safety margins for planning among dynamic
agents~\cite{lindemann2023safe, dixit2023adaptive}. Closer to our setting,
learned predictors have been calibrated conformally into forward reachable sets
that cover realized agent trajectories, after which the ego plan is checked
against those sets~\cite{chakraborty2025safety, pmlrv283contreras25a,
doula2025safepathconformalpredictionsafe}.

Existing certification methods leave two issues when applied to multimodal proposal planners. First, existing conformal planning methods primarily calibrate uncertainty in the predicted motion of surrounding agents, constructing prediction regions or reachable sets with probabilistic coverage~\cite{lindemann2023safe, dixit2023adaptive, chakraborty2025safety}. The resulting guarantees therefore concern whether the realized agent motion is contained within a calibrated uncertainty representation. Translating such coverage into a certificate on the physical clearance realized by an ego trajectory requires an additional geometric safety argument. Second, the ego trajectory to be certified is not fixed beforehand. Proposal-based planners select one trajectory from multiple candidates using predicted interactions, explicit safety criteria, or learned objectives~\cite{dauner2023pdm, zhang2025carplanner, fan2025sahdrivescenarioawarehybridplanner, lin2025modelbasedpolicyadaptationclosedloop}. The selection operation must therefore be included in the calibrated mapping if the certificate is intended to apply to the trajectory that is actually executed. The target of interest is thus the realized physical clearance of the selected trajectory, rather than prediction coverage or the safety score of an individual candidate.

We address these issues by directly calibrating the safety clearance of the trajectory selected by the planner. For each candidate trajectory, we construct a differentiable separating-axis margin that lower-bounds the exact signed oriented-bounding-box (OBB) safety clearance. The margin is evaluated over predicted agent futures using either a nominal estimate or a lower-tail CVaR statistic, after which the planner selects a trajectory. We then apply split-conformal calibration to the entire prediction, evaluation, and selection procedure, so that the resulting certificate applies to the selected trajectory rather than to an individual candidate. Rather than calibrating predictions or candidates separately, we apply conformal calibration to the complete selection procedure. The resulting certificate applies directly to the selected trajectory and provides a calibrated lower-clearance statement for a new exchangeable driving session.

The contributions are:
\begin{enumerate}
  \item \emph{Certified geometric safety margin.} Using standard separating axis and smooth min constructions, establishing a deterministic chain from a  differentiable safety margin to signed OBB safety clearance. This relation holds independently of the prediction model and ensures that a lower certificate on the smooth margin remains a valid lower certificate on physical vehicle safety clearance.
  \item \emph{Post-selection conformal certificate.} Calibrating the complete pipeline using split conformal prediction, so that the resulting safety clearance certificate applies to the trajectory actually selected by the planner.
  \item \emph{Risk-aware certificate tightness.} We use lower-tail CVaR to account for multimodal prediction uncertainty. CVaR affects certificate tightness and availability, while statistical validity is provided by conformal calibration.
  \item \emph{Large-scale empirical validation.} We evaluate the framework on 300 nuPlan driving sessions using PDM-Closed proposals~\cite{dauner2023pdm, karnchanachari2024nuplan}. At $10\%$ target miscoverage, CVaR reduces the conformal correction from $1.43$\,m to $0.03$\,m and increases nonnegative safety clearance certificates from $68.7\%$ to $87.3\%$, while maintaining $93.3$--$96.7\%$ held-out coverage.
\end{enumerate}

The proposed framework is intended for runtime trajectory evaluation rather than policy improvement. It attaches a calibrated safety clearance certificate to the trajectory selected by an existing planner, providing a safety assessment that can support downstream decisions and repeated evaluation during closed-loop operation. In our experiments, PDM-Closed serves as one application of this certification framework, with its original candidate generation and selection mechanism left unchanged.
\vspace{-4pt}

\section{Problem Formulation}

The planning information at each planning query occur at time \(t=0\) is
\[
I=(x^{\mathrm{ego}}_{[-T_{\mathrm h},0]},
  x^{1:J}_{[-T_{\mathrm h},0]},\mathcal M,\mathcal R),
\]
where \(x^{\mathrm{ego}}_{[-T_{\mathrm h},0]}\) is the ego vehicle state history, \(x^{1:J}_{[-T_{\mathrm h},0]}\) are the histories of the \(J\) tracked dynamic agents available at \(t=0\), \(\mathcal M\) is the local map context, and \(\mathcal R\) is the route context supplied to the planner. The history length \(T_{\mathrm h}\) and the state representation are determined by the external planner interface. 

Given \(I\), the planner generates a finite set of candidate ego trajectories \(\mathcal C(I)\). Each candidate \(\tau\in\mathcal C(I)\) contains \(K\) ego poses over a fixed horizon \([0,T]\). The fixed predictor, which may be part of the underlying planner, defines a conditional predictive distribution \(\widehat P(\cdot\mid I)\) over joint agent futures. Defining \(\Opred\sim\widehat P(\cdot\mid I)\) for a predicted joint future; in practice, the predictor may provide samples or a finite set for evaluating the candidates. In our experiments,
\(T=4\,\mathrm{s}\), \(K=41\), and the sampling interval is \(0.1\,\mathrm{s}\). The corresponding ground truth joint agent future \(\Oreal\) over \((0,T]\) is unavailable at planning time and is used only during calibration and evaluation.

For a candidate trajectory \(\tau\) and a joint agent future \(O\), let 
\(d_{obb}(\tau,O)\) denote the minimum signed geometric distance between the ego and agent OBBs over all evaluated time steps and agents. Positive values indicate separation, whereas negative values indicate penetration. The objective is the realized safety clearance \(d_{obb}(\tau,\Oreal)\). Since \(\Oreal\) is unavailable both when \(\tau\) is selected and when its certificate is computed, this quantity cannot be evaluated at planning time. Rather than forming a point estimate of the unknown realized safety clearance, our objective is to compute from information at plan time with a conservative lower certificate whose finite sample coverage is established by conformal calibration.

Conformal calibration is performed over drive session units rather than individual planning windows. Let \(U\) denote a session exchangeability unit and \(\mathcal W(U)\) its set of evaluated planning windows. All evaluated windows originating from the same session are assigned to the same unit, and their scores are  aggregated by the within unit maximum. The theory permits arbitrary dependence within a session and assumes exchangeability only across calibration and test units. In nuPlan experiment, each session contributes one evaluated planning query, so the unit maximum reduces to that query's score.

\begin{assumption}[Pipeline chosen before calibration]
\label{ass:pipeline}
The same procedure is used during calibration and deployment to produce the  selected trajectory and its margin. The procedure is chosen without using  calibration outcomes and does not use \(\Oreal\) at planning time. It may use internal randomness independent of \(\Oreal\).
\end{assumption}

\begin{assumption}[Deployment exchangeability]
\label{ass:exchangeability}
After applying the same window extraction and eligibility rules, the calibration sessions \(U_1,\ldots,U_n\) and a fresh session \(U_{n+1}\) from the target deployment population are exchangeable. Dependence among windows within the same session is unrestricted.
\end{assumption}

Assumption~\ref{ass:pipeline} ensures that calibration and deployment use the same procedure to produce the selected trajectory and its margin, excluding information unavailable at planning time and adaptation to calibration outcomes. Assumption~\ref{ass:exchangeability} provides the rank symmetry
required by split conformal prediction at the session level, while the unit maximum accounts for dependence among windows from the same session. These assumptions connect the calibration scores to a fresh deployment unit and enable the unit-level finite sample conformal coverage guarantee for realized safety clearance. Prediction and sampling errors in the plan time statistic are reflected in the conformal scores and therefore do not require a separate correctness guarantee for the statistic itself.




\section{Certified Geometric Margin}
\label{sec:certgeo}
Consider ego rectangle \(A\) and agent rectangle \(B\). Let \(e_s,e_d\) be the ego longitudinal and lateral unit axes, \(\Delta p\) the center displacement, and \(\Delta\psi\) the relative yaw.  For axis \(r\in\{s,d\}\), the signed projection gap is
\begin{equation}
h_r=|\Delta p^\top e_r|-R_{A,r}-R_{B,r}-\mu_r,
\label{eq:axis-gap}
\end{equation}
where \(R_{A,s}=L_A/2\), \(R_{A,d}=W_A/2\), and
\begin{align}
R_{B,s}&=\tfrac{L_B}{2}|\cos\Delta\psi|
       +\tfrac{W_B}{2}|\sin\Delta\psi|,\\
R_{B,d}&=\tfrac{L_B}{2}|\sin\Delta\psi|
       +\tfrac{W_B}{2}|\cos\Delta\psi|,
\end{align}
where \(L_A, L_B, W_A, W_B\)\ is the length and width of ego and agent rectangle. The nonnegative \(\mu_s,\mu_d\) are safety margin. The two axis SAT margin is \(\hsat=\max(h_s,h_d)\).  To obtain a smooth quantity without losing the
lower-bound direction, define
\begin{align}
h_{\mathrm{lse}}&=\alpha_g^{-1}\log
  \left(e^{\alpha_g h_s}+e^{\alpha_g h_d}\right),\\
\hcert&=h_{\mathrm{lse}}-\frac{\log 2}{\alpha_g},\qquad \alpha_g>0,
\label{eq:certmargin}
\end{align}
where parameter \(\alpha_g\)\ regulate the degree of smoothness. For trajectories, all margins denote the minimum over evaluated agent pairs. \cite{takasugi2024SSAT}

\begin{lemma}[Certified geometry]
For every candidate \(\tau\) and every joint future \(O\),
\begin{equation}
\hcert(\tau,O)\leq\hsat(\tau,O)\leq\dexact(\tau,O).
\label{eq:chain}
\end{equation}
\end{lemma}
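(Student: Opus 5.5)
We prove the chain pointwise, for a single ego--agent pair at a single time step, and then pass to trajectories. Every quantity in \eqref{eq:chain} is defined for a trajectory as a minimum over the same index set of (time step, agent) pairs. Taking minima preserves pointwise inequalities: if \(a_i\le b_i\) for every \(i\), then \(\min_i a_i\le \min_i b_i\). So it suffices to show \(\hcert\le\hsat\le\dexact\) for one pair of rectangles \(A,B\).

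\emph{Left inequality.} This is the standard log-sum-exp sandwich. Write \(m=\max(h_s,h_d)\). Then
\[
e^{\alpha_g h_s}+e^{\alpha_g h_d}\le 2e^{\alpha_g m},
\]
so \(h_{\mathrm{lse}}\le m+\alpha_g^{-1}\log 2\). Subtracting \(\alpha_g^{-1}\log 2\) gives \(\hcert\le\hsat\). This holds for every \(\alpha_g>0\) and does not depend on the geometry.

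\emph{Right inequality.} This is the geometric step, and I expect it to be the main obstacle. The difficulty is the sign convention of \(\dexact\) as a signed distance, which must be handled in both the separated and the penetrating cases.

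Fix an axis \(e_r\) with \(r\in\{s,d\}\). The projection of the ego rectangle \(A\) onto \(e_r\) is an interval of half-width \(R_{A,r}\), because \(e_r\) is one of \(A\)'s own principal axes. The projection of \(B\) has half-width \(\tfrac{L_B}{2}|u_B^\top e_r|+\tfrac{W_B}{2}|v_B^\top e_r|\), where \(u_B,v_B\) are \(B\)'s axes. This equals \(R_{B,r}\) after writing the inner products with \(\Delta\psi\). The gap between the two projected intervals is therefore
\[
g_r=|\Delta p^\top e_r|-R_{A,r}-R_{B,r}=h_r+\mu_r\ge h_r .
\]

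\emph{Separated case.} Suppose \(\dexact>0\), so the rectangles are disjoint. Orthogonal projection onto a unit vector is \(1\)-Lipschitz. Hence for any \(a\in A\) and \(b\in B\) we have \(\|a-b\|\ge |e_r^\top(a-b)|\ge g_r\), and so \(\dexact\ge g_r\ge h_r\) for each \(r\).

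\emph{Penetrating case.} Suppose \(\dexact\le0\). Here \(\dexact\) equals minus the penetration depth, which is the minimal translation distance needed to separate the rectangles. By the SAT characterization of penetration depth for convex polygons, this depth is \(\min_{n}(-g_n)\), minimized over the face normals of both boxes. Therefore
\[
\dexact=\max_{n\in\{\text{face normals of }A,B\}} g_n\ge g_r\ge h_r .
\]
This also covers the case where the maximum over all normals is positive but \(g_s,g_d\le0\). Then \(\dexact>0\ge h_r\) holds trivially, and that situation is already handled by the separated case anyway.

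Taking the maximum over \(r\in\{s,d\}\) gives \(\hsat\le\dexact\). If \(\dexact\) is defined directly as the signed distance, with the two-dimensional SAT formula as its defining property for boxes, the step above reduces to the observation that restricting to a subset of candidate axes and subtracting \(\mu_r\ge0\) can only decrease the maximum.

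Finally, combining both inequalities pointwise and taking minima over agents and time steps yields \eqref{eq:chain}.
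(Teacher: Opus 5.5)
Your proof is correct and follows essentially the same route as the paper. Both use the log-sum-exp sandwich for $\hcert\le\hsat$, the unit-axis projection bound in the separated case, and the SAT face-normal characterization of penetration depth in the overlapping case (with $\hsat$ maximizing over a subset of the four normals), then note that padding and minima over time steps and agents preserve the inequalities. The differences are cosmetic: you verify the left inequality explicitly where the paper only asserts it, and you absorb the padding via $g_r=h_r+\mu_r\ge h_r$ instead of first reducing to $\mu_s=\mu_d=0$.
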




\begin{proof} 
we have obtained \begin{equation} h_c \le h_{\mathrm{sat}} \end{equation} from~\eqref{eq:certmargin}. 
It remains to show $h_{\mathrm{sat}}\le d^\star$. It suffices first to consider the unpadded case $\mu_s=\mu_d=0$, since adding nonnegative padding can only decrease $h_s$ and $h_d$, and hence cannot increase $h_{\mathrm{sat}}$.

Suppose first that the two rectangles are disjoint. Along any unit axis $e_r$, $r\in\{s,d\}$, the projection gap in \eqref{eq:axis-gap} cannot exceed the Euclidean distance between the two rectangles. In particular, if $a^\star\in A$ and $b^\star\in B$ are a closest pair of points, then 
\begin{equation} 
h_r \le \left|e_r^\top(b^\star-a^\star)\right| \le \|b^\star-a^\star\|_2 = d^\star(A,B), 
\end{equation} 
where the second inequality follows from $\|e_r\|_2=1$. Therefore, \begin{equation} 
h_{\mathrm{sat}} = \max(h_s,h_d) \le d^\star(A,B). 
\end{equation}

Now suppose that the rectangles overlap. For two rectangles, the signed penetration distance is given by the largest signed projection gap over the complete set of SAT face-normal axes, 
\begin{equation} 
\mathcal{N}_{\mathrm{SAT}} = \{e_s^A,e_d^A,e_s^B,e_d^B\}. 
\end{equation} 
Hence, 
\begin{equation} 
d^\star(A,B) = \max_{n\in\mathcal{N}_{\mathrm{SAT}}} h_n, 
\end{equation} 
where $h_n$ denotes the corresponding unpadded signed projection gap. Since $h_{\mathrm{sat}}$ maximizes only over the two ego axes, 
\begin{equation} 
\{e_s^A,e_d^A\}\subseteq\mathcal{N}_{\mathrm{SAT}}, 
\end{equation} 
and therefore
\begin{equation} 
h_{\mathrm{sat}} \le \max_{n\in\mathcal{N}_{\mathrm{SAT}}} h_n = d^\star(A,B). 
\end{equation}

Restoring $\mu_s,\mu_d\ge0$ can only decrease $h_{\mathrm{sat}}$, so the inequality remains valid. Thus, for every pair, 
\begin{equation} 
h_c \le h_{\mathrm{sat}}\le d^\star. 
\end{equation}

Taking the minimum over all evaluated time steps and agents preserves both inequalities, yielding 
\begin{equation} 
h_c(\tau,O) \le h_{\mathrm{sat}}(\tau,O) \le d^\star(\tau,O). 
\end{equation} 
\end{proof}

The lemma is deterministic and independent with predictor. In the implementation,
the exact distance is used for retrospective verification, whereas
\(\hcert\) is the deployed evaluator margin.

\section{Risk-Aware Evaluation and Conformal Certification}
\label{sec:riskaware}
Section~\ref{sec:certgeo} establishes a deterministic geometric relation between the smooth separation axis margin and the signed OBB safety clearance for a given candidate trajectory and joint agent future. However, the actual future agent motion is unknown at planning time. We therefore evaluate the geometric margin under the predictive distribution of agent motion and calibrate the complete pipeline with observed future outcomes.
\vspace{-4pt}

\subsection{Plan-Time Risk Evaluation}

For a candidate trajectory $\tau$ and planning information $I$, let
\begin{equation}
M(\tau,I)
=
h_c(\tau,\widehat O),
\label{eq:random_margin}
\end{equation}
The predictive uncertainty in $\widehat O$ therefore induces a distribution over the certified geometric margin of each candidate trajectory.
A nominal evaluator uses a single predicted future $\widehat O_0$,
\begin{equation}
\widehat m_{\mathrm{nom}}(\tau,I)
=
h_c(\tau,\widehat O_0).
\label{eq:nominal_margin}
\end{equation}

To account for adverse predicted futures, we additionally use a lower-tail Conditional Value-at-Risk (CVaR). For
$\beta\in(0,1]$, define
\begin{equation}
\operatorname{CVaR}_{\beta}(M)
=
\frac{1}{\beta}
\int_0^\beta F_M^{-1}(u)\,du,
\label{eq:lower_cvar}
\end{equation}

where $F_M^{-1}$ denotes the quantile function of $M$. Because $M$ is a safety clearance for which smaller values are less safe, \eqref{eq:lower_cvar} averages the lower tail rather than the upper tail commonly used for loss variables. Smaller $\beta$ therefore places greater emphasis on adverse predicted futures.

In practice, the predictive distribution is represented by a finite set of predictor samples. We compute the empirical lower-tail CVaR from the corresponding margin samples and denote it by 
\begin{equation} 
\widehat m_{\mathrm{cvar}}(\tau,I) = \widehat{\operatorname{CVaR}}_{\beta} \left( M_1(\tau,I),\ldots,M_N(\tau,I) \right), 
\label{eq:sampled_cvar} 
\end{equation} 
where 
\begin{equation} 
M_i(\tau,I) = h_c(\tau,\widehat O_i) 
\end{equation} 
is the certified geometric margin under the $i$th predicted joint future. We use $\widehat m(\tau,I)$ to denote either $\widehat m_{\mathrm{nom}}(\tau,I)$ or $\widehat m_{\mathrm{cvar}}(\tau,I)$. Neither statistic is itself a realized safety clearance certificate; both are plan-time statistics whose prediction and sampling errors are absorbed by the post-selection conformal calibration described below.
\vspace{-4pt}

\subsection{Fixed Selection Pipeline for Calibration}

The trajectory to be certified is not fixed in advance, but is selected using information available at planning time. Because this selection may depend on predicted agent motion and the plan time margin statistic, the selection step must be included in the mapping that is subsequently calibrated. Let 
\begin{equation} 
\tau^\star = \sigma(I) 
\end{equation} 
denote the trajectory returned by the pipeline at plan time. 
Here, $\sigma$ represents the fixed procedure that generates candidate trajectories, evaluates them using predicted information, and selects one trajectory for execution. The procedure may depend on the candidate set, predicted agent futures, the statistic $\widehat{m}$, map and route information, and any other quantities available at planning time, but it does not use the realized future $O^\star$. 
The certification result does not require a particular form of selector or ranking objective. However, once a selection procedure is chosen, the complete procedure must be fixed before calibration and used unchanged at deployment. 
In particular, proposal generation, eligibility rules, prediction, risk evaluation, ranking objectives, and fallback behavior that affect the selected trajectory are treated as part of this fixed mapping. The relevant output at plan time is therefore 
\begin{equation} 
I \longmapsto \left( \tau^\star,\, \widehat{m}(\tau^\star,I) \right). \end{equation} 
Conformal calibration is applied to the residual associated with this selected output, rather than to candidate trajectories independently. Consequently, the resulting certificate applies directly to the trajectory produced by the fixed planning pipeline.

\subsection{Post-Selection Conformal Calibration}

Post-selection calibration accounts for the discrepancy between the plan time margin assigned to the selected trajectory and the margin realized under the observed future. For window $w$ in calibration unit $U_j$, let
\begin{equation}
    \tau_{j,w} = \sigma(I_{j,w})
\end{equation}
denote the trajectory returned by the fixed pipeline. Defining the post selection residual as
\begin{equation}
    s_{j,w}
    =
    \widehat{m}(\tau_{j,w}, I_{j,w})
    -
    h_c(\tau_{j,w}, O^\star_{j,w}),
    \label{eq:window_residual}
\end{equation}
which measures how much statistic overestimates the realized smooth geometric margin at the plan time. Larger values corresponding to greater overestimation.

Exchangeability is assumed across driving sessions rather than across planning windows. To permit arbitrary dependence among windows from the same session, the planning window residuals are aggregated into a single unit score,
\begin{equation}
    S_j
    =
    \max_{w \in W(U_j)} s_{j,w}.
    \label{eq:unit_score}
\end{equation}
Consequently, an upper bound on the unit level score implies the same bound on every evaluated window within that session.

Given $n$ calibration units and a target miscoverage level $\eta\in(0,1)$, the standard split conformal finite sample quantile \cite{vovk2005algorithmic} is defined by
\begin{equation}
    k
    =
    \left\lceil
        (n+1)(1-\eta)
    \right\rceil
\end{equation}
and let
\begin{equation}
    \widehat{q}
    =
    S_{(k)},
    \label{eq:conformal_quantile}
\end{equation}
where $S_{(k)}$ denotes the $k$th order statistic of $S_1,\ldots,S_n$. If $k>n$, $\widehat{q}$ is set to $+\infty$. Because the unit level scores are computed after applying the fixed selection pipeline, $\widehat{q}$ calibrates the post selection residuals of that pipeline.

For a selected trajectory, the resulting calibrated lower safety clearance certificate
is
\begin{equation}
    \widehat{c}(\tau,I)
    =
    \widehat{m}(\tau,I)-\widehat{q}.
    \label{eq:clearance_certificate}
\end{equation}

The following result shows that this quantity lower bounds the realized exact OBB Safety clearance with finite sample coverage.

\begin{theorem}[Exact Safety Clearance Bound]
\label{thm:exact_clearance}
Under Assumption~\ref{ass:pipeline} and~\ref{ass:exchangeability}, for a fresh exchangeable unit $U_{n+1}$,
\begin{equation}
\mathbb{P}\!\left(
    \forall w \in W(U_{n+1}):
    d^\star(\tau_w,O^\star_w)
    \ge
    \widehat{m}(\tau_w,I_w)-\widehat{q}
\right)
\ge
1-\eta .
\label{eq:exact_clearance_bound}
\end{equation}
\end{theorem}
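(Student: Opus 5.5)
The proof combines standard split-conformal coverage at the session level with the deterministic chain of the Certified geometry lemma.

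\emph{Step 1: define the test score.} For the fresh unit \(U_{n+1}\), define its unit score as in \eqref{eq:unit_score}: for each \(w\in W(U_{n+1})\), let \(s_{n+1,w}=\widehat m(\tau_w,I_w)-\hcert(\tau_w,O^\star_w)\), and set \(S_{n+1}=\max_w s_{n+1,w}\). The unit score is a fixed measurable function of the unit's data. Assumption~\ref{ass:pipeline} guarantees this: the same pipeline \(\sigma\) and statistic \(\widehat m\) are used on every unit, they are not tuned on calibration outcomes, and any internal randomness is independent of \(\Oreal\).

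\emph{Step 2: exchangeability of scores.} By Assumption~\ref{ass:exchangeability}, the units \(U_1,\dots,U_{n+1}\) are exchangeable. Applying the same fixed map to each unit, \(S_1,\dots,S_{n+1}\) are therefore exchangeable. If the pipeline uses internal randomness, I would draw it i.i.d.\ per unit, or condition on it, so that exchangeability is preserved.

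\emph{Step 3: rank argument.} Apply the standard rank argument. With \(k=\lceil (n+1)(1-\eta)\rceil\), exchangeability implies that the rank of \(S_{n+1}\) among all \(n+1\) scores is, under ties broken conservatively, stochastically no larger than uniform on \(\{1,\dots,n+1\}\). Moreover, \(S_{n+1}\le S_{(k)}\) among the calibration scores whenever \(S_{n+1}\) is among the \(k\) smallest of the augmented set. This yields \(\mathbb P(S_{n+1}\le \qhat)\ge k/(n+1)\ge 1-\eta\). If \(k>n\), then \(\qhat=+\infty\) and the bound holds trivially.

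I expect this to be the main point needing care. Ties must be handled with the "\(\le\)" direction, which is why no continuity assumption is needed. One must also check that \(\qhat\) uses only the \(n\) calibration scores while the argument uses the augmented set. Both follow the usual proof in \cite{vovk2005algorithmic}, which I would cite rather than reprove.

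\emph{Step 4: from the event to the clearance bound.} On the event \(S_{n+1}\le\qhat\), every window satisfies \(s_{n+1,w}\le\qhat\), that is, \(\hcert(\tau_w,O^\star_w)\ge \widehat m(\tau_w,I_w)-\qhat\). The Certified geometry lemma \eqref{eq:chain} holds deterministically for every trajectory and future, in particular for the selected \(\tau_w\) and the realized \(O^\star_w\). It gives \(\dexact(\tau_w,O^\star_w)\ge \hsat\ge\hcert\). Chaining these inequalities, the event in \eqref{eq:exact_clearance_bound} contains \(\{S_{n+1}\le\qhat\}\), which has probability at least \(1-\eta\).
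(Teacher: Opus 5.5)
Your proposal is correct and follows essentially the same route as the paper: define the fresh unit's max-residual score $S_{n+1}$, use exchangeability and the split-conformal rank argument to get $\mathbb{P}(S_{n+1}\le \widehat{q})\ge 1-\eta$, then chain with the deterministic bound $d^\star \ge h_c$ from the Certified geometry lemma. Your handling of ties, the $k>n$ case, and per-unit internal randomness fills in details that the paper leaves to the cited standard argument.
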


\begin{proof}
For the fresh unit $U_{n+1}$, define its unit-level score analogously as
\begin{equation}
    S_{n+1}
    =
    \max_{w\in W(U_{n+1})}
    \left[
        \widehat{m}(\tau_w,I_w)
        -
        h_c(\tau_w,O^\star_w)
    \right].
\end{equation}

Under Assumption~\ref{ass:pipeline} and Assumption~\ref{ass:exchangeability}, the same fixed score construction is applied to each exchangeable unit. Hence, $S_1,\ldots,S_n,S_{n+1}$ are exchangeable. By the standard split-conformal rank argument \cite{vovk2005algorithmic} and the finite-sample order statistic in \eqref{eq:conformal_quantile},
\begin{equation}
    \mathbb{P}
    \left(
        S_{n+1}\le\widehat{q}
    \right)
    \ge
    1-\eta .
\end{equation}

On the event $S_{n+1}\le\widehat{q}$, every
$w\in W(U_{n+1})$ satisfies
\begin{equation}
    \widehat{m}(\tau_w,I_w)
    -
    h_c(\tau_w,O^\star_w)
    \le
    \widehat{q},
\end{equation}
or equivalently,
\begin{equation}
    h_c(\tau_w,O^\star_w)
    \ge
    \widehat{m}(\tau_w,I_w)-\widehat{q}.
\end{equation}
Lemma~1 gives the pointwise geometric relation
\begin{equation}
    d^\star(\tau_w,O^\star_w)
    \ge
    h_c(\tau_w,O^\star_w).
\end{equation}
Combining the two inequalities yields
\eqref{eq:exact_clearance_bound}. Because the score in
\eqref{eq:unit_score} is the maximum over all windows in the unit, the bound
holds simultaneously for every evaluated window in $U_{n+1}$.
\end{proof}

Theorem~\ref{thm:exact_clearance} separates predictive informativeness from statistical validity. The predictor and the choice of $\widehat{m}$ determine the tightness of the certificate, whereas the finite sample coverage guarantee is supplied by conformal calibration. In particular, $\widehat{m}$ need not itself be a valid estimate or lower bound on realized safety clearance.

For deployment, the lower safety clearance certificate can be converted into a direct acceptance rule for any requested safety clearance buffer $\rho\ge0$.

\begin{corollary}[Safety Clearance Certification Guarantee]
\label{cor:false_certification}
For a requested safety clearance buffer $\rho\ge0$, certify a selected trajectory at window $w$ whenever
\begin{equation}
    \widehat{m}(\tau_w,I_w)
    \ge
    \widehat{q}+\rho.
    \label{eq:certification_rule}
\end{equation}
Then
\begin{equation}
\mathbb{P}\!\left(
    \exists w\in W(U_{n+1}):
    \widehat{m}(\tau_w,I_w)\ge\widehat{q}+\rho,
    \;
    d^\star(\tau_w,O^\star_w)<\rho
\right)
\le
\eta .
\label{eq:false_certification}
\end{equation}
\end{corollary}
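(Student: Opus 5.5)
The corollary follows from Theorem~\ref{thm:exact_clearance} by an event inclusion. Let $E$ denote the good event inside the probability in \eqref{eq:exact_clearance_bound}:
\[
E=\bigl\{\forall w\in W(U_{n+1}):\ d^\star(\tau_w,O^\star_w)\ge \widehat m(\tau_w,I_w)-\widehat q\bigr\}.
\]
Theorem~\ref{thm:exact_clearance} gives $\mathbb P(E)\ge 1-\eta$, so $\mathbb P(E^c)\le\eta$. It therefore suffices to show that the false-certification event in \eqref{eq:false_certification},
\[
F=\bigl\{\exists w\in W(U_{n+1}):\ \widehat m(\tau_w,I_w)\ge\widehat q+\rho,\ d^\star(\tau_w,O^\star_w)<\rho\bigr\},
\]
is contained in $E^c$. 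Monotonicity of probability then yields $\mathbb P(F)\le\mathbb P(E^c)\le\eta$.

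To show $F\subseteq E^c$, I would argue on $E\cap F$ and derive a contradiction. On $F$, fix a witnessing window $w$. By the certification rule \eqref{eq:certification_rule}, $\widehat m(\tau_w,I_w)-\widehat q\ge\rho$. On $E$, applied at that same $w$, we get
\[
d^\star(\tau_w,O^\star_w)\ge\widehat m(\tau_w,I_w)-\widehat q\ge\rho.
\]
This contradicts $d^\star(\tau_w,O^\star_w)<\rho$, so $E\cap F=\emptyset$. The quantifiers align correctly: $E$ is a universal statement over the windows of the fresh unit and $F$ is existential over the same windows, so the simultaneous-over-windows guarantee of the theorem is exactly what controls the existential failure. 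This works because the theorem was stated with the unit maximum score \eqref{eq:unit_score}.

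The edge case $k>n$ is harmless. There $\widehat q=+\infty$, the rule \eqref{eq:certification_rule} never fires, $F$ is empty, and the bound holds trivially. This is consistent with the theorem's convention.

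The only step needing care is that the same random $\widehat q$ appears in both events. Since the corollary adds no new randomness beyond the deterministic threshold $\rho$, no additional exchangeability argument is required. I therefore do not expect a real obstacle.
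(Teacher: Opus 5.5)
Your proof is correct and is essentially the paper's argument: both show that the false-certification event lies inside a conformal failure event whose probability is at most $\eta$. The only difference is bookkeeping. The paper re-invokes Lemma~1 to show directly that a false certification forces $S_{n+1}>\widehat{q}$, whereas you use Theorem~\ref{thm:exact_clearance} as a black box and show containment in the complement of its good event, which already includes that Lemma~1 step.
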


\begin{proof}
Suppose a false certification occurs for some
$w\in W(U_{n+1})$. Then
\begin{equation}
    d^\star(\tau_w,O^\star_w)
    <
    \rho
    \le
    \widehat{m}(\tau_w,I_w)-\widehat{q}.
\end{equation}
By Lemma~1,
\begin{equation}
    h_c(\tau_w,O^\star_w)
    \le
    d^\star(\tau_w,O^\star_w),
\end{equation}
and therefore
\begin{equation}
    \widehat{m}(\tau_w,I_w)
    -
    h_c(\tau_w,O^\star_w)
    >
    \widehat{q}.
\end{equation}
Hence $S_{n+1}>\widehat{q}$. The false-certification event is therefore
contained in the split-conformal failure event, whose probability is at most
$\eta$.
\end{proof}

Equation~\eqref{eq:false_certification} controls the joint event that a trajectory is certified and its realized exact safety clearance is below the requested buffer:
\begin{equation}
\mathbb{P}\left(
\text{certified}
\cap
{d^\star<\rho}
\right)
\le
\eta.
\end{equation}
Thus, the guarantee concerns the probability of a false certification event under the deployment population. It is a marginal joint event guarantee, not a conditional guarantee  the subset of certified trajectories. Likewise, guarantees conditional on city, scenario class, or other deployment subpopulations require the corresponding exchangeability assumption to hold for those populations.

Finally, because trajectory selection is included in the residual \eqref{eq:window_residual}, the certificate applies to the trajectory returned by the fixed planning pipeline. If simultaneous certification of all candidates is required, the window level score can instead be defined as the maximum residual over $\tau\in\mathcal{C}(I_w)$ before applying the unit level maximum in \eqref{eq:unit_score}.

\section{nuPlan Experiments}

\subsection{Experimental Protocol}

Experiments are conducted on the nuPlan benchmark using driving sessions from Boston, Pittsburgh, and Singapore. The frozen dataset contains 300 independent drive sessions, with 100 sessions from each city. One planning query is evaluated per session, so each session forms one exchangeability unit and the unit level score in \eqref{eq:unit_score} reduces to a single window residual.

For each query, the native PDM-Closed proposal generator produces a bank of 15 candidate ego trajectories over a 4~s horizon. Trajectory selection is completed before accessing future annotations. The prediction module uses at most 12 dynamic agents available at planning time. In contrast, realized safety clearance is evaluated retrospectively at every horizon step over all logged vehicles, pedestrians, and bicycles, including agents omitted by the predictor after the planning query. Thus, prediction omissions are reflected in the realized margin and subsequently in the conformal residual.

The 300 sessions are divided by a seeded random split into 150 calibration sessions and 150 held-out test sessions, with no session appearing in both sets. This pooled split represents an approximately balanced three-city deployment mixture. Only queries with at least one nearby dynamic agent available for plan-time prediction are retained.

\textbf{Parameters Settings:}
Two plan-time statistics are evaluated. The nominal method uses $\widehat{m}_{\mathrm{nom}}$ in \eqref{eq:nominal_margin}, whereas the risk-aware method uses the sampled lower tail CVaR statistic $\widehat{m}_{\mathrm{cvar}}$ in \eqref{eq:sampled_cvar}. For CVaR, $\beta\in\{0.30,0.10\}$ denotes the fraction of the lower tail averaged in the risk statistic, with smaller $\beta$ placing greater emphasis on adverse predicted futures. Each statistic is calibrated independently using the same frozen calibration and test split.

Conformal calibration is evaluated at target miscoverage levels
$\eta\in\{0.10,0.05\}$, corresponding to target coverage levels of $1-\eta\in\{0.90,0.95\}$. The parameter $\rho$ denotes the requested safety clearance buffer in the certification rule
$\widehat{m}(\tau,I)-\widehat{q}\ge\rho$. Unless otherwise stated, $\rho=0$, so certification requires a nonnegative calibrated lower safety clearance bound.

\textbf{Evaluation Metrics:}
The primary metrics are the conformal correction $\widehat{q}$, exact safety clearance coverage, certificate rate, and unit level false certification rate. Exact safety clearance coverage is the fraction of test units satisfying the event in Theorem~\ref{thm:exact_clearance}. The certificate rate is the fraction of evaluated test windows satisfying $\widehat{m}(\tau,I)-\widehat{q}\ge\rho$. The unit level false certification rate is the fraction of test units containing at least one certified window whose realized exact safety clearance is below $\rho$.
\vspace{-4pt}

\subsection{Conformal Certification Results}

Table~\ref{tab:conformal-results} reports the primary combined three-city evaluation, while Table~\ref{tab:city-results} examines the same certification procedure using separate calibration and test splits within each city.

\textbf{Combined City Evaluation:}
Table~\ref{tab:conformal-results} summarizes the results on the three city split. At $\eta=0.10$, the nominal statistic requires a conformal correction of $\widehat{q}=1.43$~m and achieves an exact safety clearance coverage of $93.3\%$, with $68.7\%$ of the test trajectories receiving a nonnegative safety clearance certificate. Replacing the nominal statistic with the sampled lower tail CVaR statistic substantially reduces the required correction. For both $\beta=0.30$ and $\beta=0.10$, the correction decreases to $\widehat{q}=0.03$~m, while the certificate rate increases to $87.3\%$. The corresponding exact safety clearance coverages are $96.0\%$ and $96.7\%$,
respectively.
The same trend is observed at the stricter miscoverage level $\eta=0.05$. The nominal correction increases to $3.66$~m, reducing the certificate rate to $40.7\%$, whereas the CVaR statistics require a smaller correction of $0.24$~m and retain certificate rates of $84.0$--$86.0\%$.

Across all settings, the empirical exact safety clearance coverage remains at or above
the corresponding target level. These results indicate that the choice of
plan time statistic primarily affects certificate tightness and availability,
while finite sample validity is provided by the conformal calibration.

\begin{table}[h]
\label{tab:conformal-results}
\caption{Session level conformal evaluation on the nuPlan PDM benchmark. Exact cov. denotes the event in Theorem~\ref{thm:exact_clearance}; Unit FC is the unit level false certification rate. Certification uses $\rho=0$.}
\vspace{-10pt}
\centering
\footnotesize
\begin{tabular}{lccccc}
\toprule
Statistic & \(\eta\) & \(\qhat\) & Exact cov. & Cert. rate & Unit FC \\
\midrule
Nominal & .10 & 1.43 & .933 & .687 & .007 \\
CVaR .30 & .10 & 0.03 & .960 & .873 & .020 \\
CVaR .10 & .10 & 0.03 & .967 & .873 & .020 \\
Nominal & .05 & 3.66 & .967 & .407 & .007 \\
CVaR .30 & .05 & 0.24 & .973 & .860 & .020 \\
CVaR .10 & .05 & 0.24 & .973 & .840 & .013 \\
\bottomrule
\end{tabular}
\end{table}

\begin{figure}[h]
\centering
\includegraphics[width=0.4\textwidth]{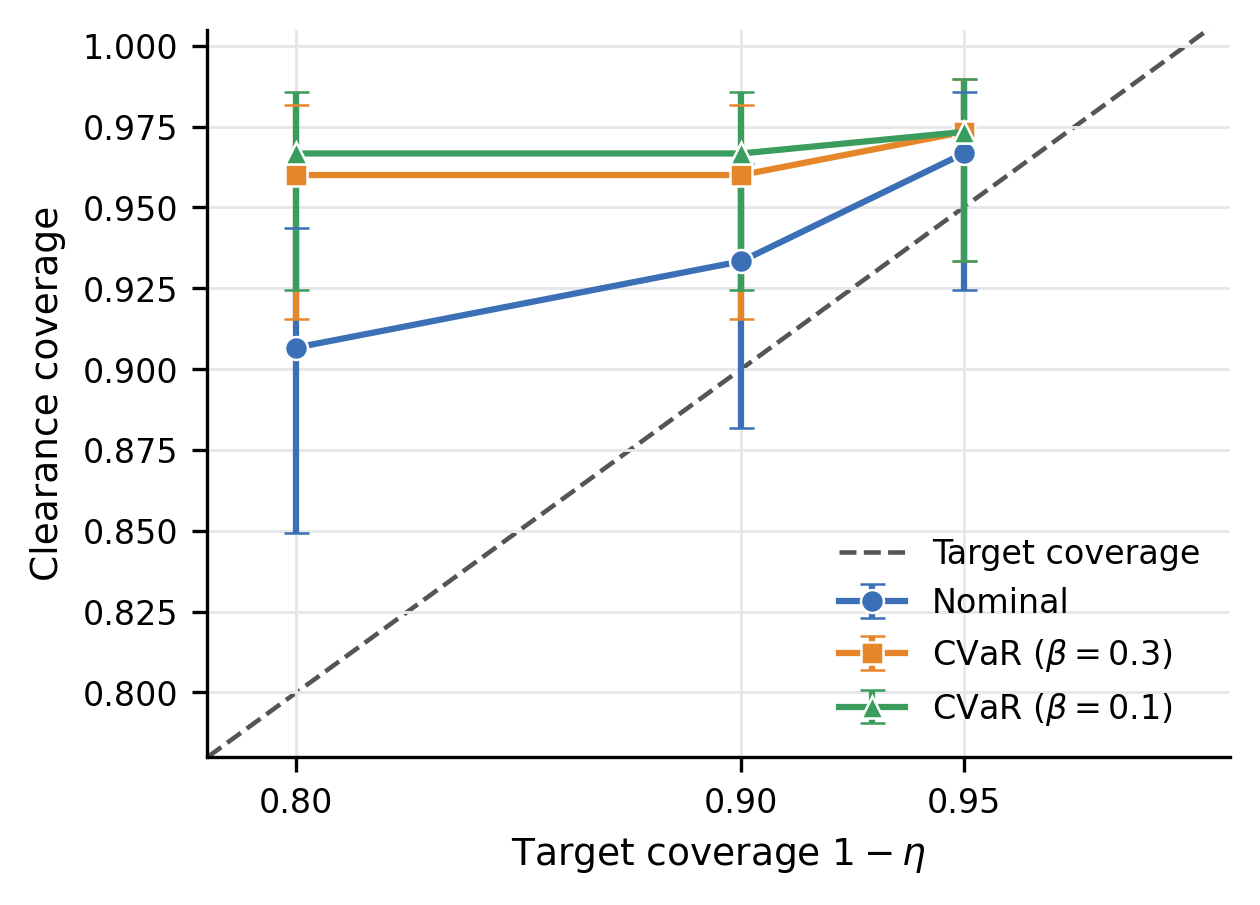}
\vspace{-12pt}
\caption{Exact safety clearance coverage versus target coverage $1-\eta$ over 150 held-out test sessions. Error bars denote 95\% Wilson intervals, and the dashed line indicates the target coverage.}
\label{fig:validity}
\end{figure}
\vspace{-8pt}

Figure~\ref{fig:validity} shows that safety clearance coverage tracks the target level across all evaluated statistics. The CVaR-based statistics remain consistently above the target and exhibit higher coverage than the nominal statistic at the lower target levels. The observed coverage is consistent with the finite sample validity guarantee. Differences in empirical coverage across the statistics reflect finite sample behavior.

\textbf{Within City Diagnostic:}
Table~\ref{tab:city-results} reports separate city calibration and results at $\eta=0.10$ and $\rho=0$, using 50 calibration and 50 test sessions per city. The nominal conformal correction varies substantially across cities, from $\widehat{q}=0.15$~m in Pittsburgh to $\widehat{q}=2.70$~m in Singapore, with the Singapore certificate rate decreasing to $60\%$. The larger Singapore correction is caused by several large nominal residuals in the upper tail of its frozen calibration split. Because this larger correction directly lowers $\widehat{m}-\widehat{q}$, the Singapore certificate rate decreases to $60\%$, compared with $88\%$ in Boston and $92\%$ in Pittsburgh.

For the CVaR statistic with $\beta=0.10$, the correction remains between $0.02$ and $0.24$~m across the three cities, with exact safety clearance coverage of $96$--$98\%$ and certificate rates of $88$--$90\%$. The smaller variation suggests greater empirical stability of the CVaR-based statistic under these frozen within city splits. The city specific results are interpreted as diagnostics of finite-sample calibration behavior rather than as a ranking of city difficulty or as additional conditional conformal guarantees.

\begin{table}[h]
\label{tab:city-results}
\caption{Within-city conformal evaluation at $\eta=0.10$ and $\rho=0$ using 50 calibration and 50 test sessions per city. Cov. denotes exact safety clearance coverage and Cert. the certificate rate.}
\vspace{-10pt}
\centering
\footnotesize
\setlength{\tabcolsep}{3pt}
\begin{tabular}{lrrrrrr}
\toprule
& \multicolumn{3}{c}{Nominal} & \multicolumn{3}{c}{CVaR .10} \\
\cmidrule(lr){2-4}\cmidrule(lr){5-7}
City & \(\qhat\) & Cov. & Cert. & \(\qhat\) & Cov. & Cert. \\
\midrule
Boston     & .66 & .88 & .88 & .24 & .98 & .90 \\
Pittsburgh & .15 & .92 & .92 & .03 & .96 & .90 \\
Singapore  & 2.70 & .94 & .60 & .02 & .98 & .88 \\
\bottomrule
\end{tabular}
\end{table}
\vspace{-10pt}

\begin{figure*}[t]
\centering
\includegraphics[width=0.85\textwidth]{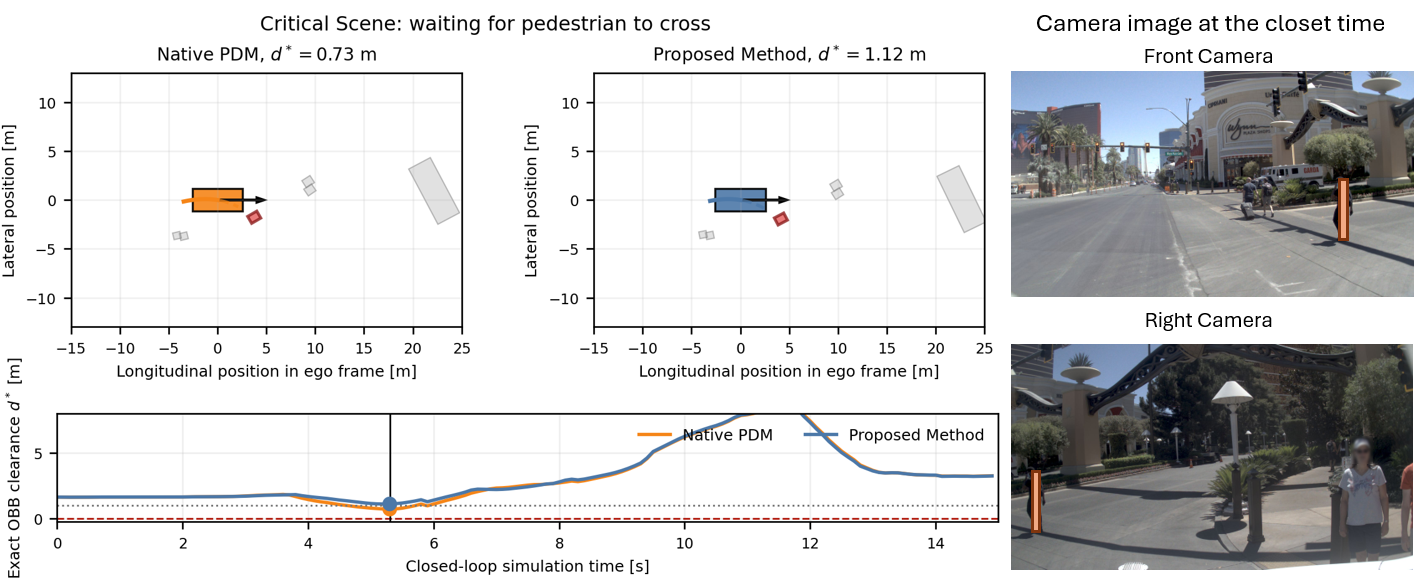}
\vspace{-11pt}
\caption{Illustrative paired CLS-NR pedestrian scene. The top panels compare native PDM and the proposed method at an intervention frame; the closest dynamic agent is shown in red, with solid and dashed paths indicating subsequent motion. At the illustrated frame, the exact OBB safety clearance is $0.73$~m for native PDM and $1.12$~m for the proposed method. The lower panel shows the safety clearance traces over the complete rollout. The vertical line marks the illustrated frame, while the horizontal dashed lines mark contact at $0$~m and the $1$~m clearance threshold.}
\label{fig:pedestrian-case}
\end{figure*}
\vspace{-4pt}
\begin{figure*}[t]
\centering
\includegraphics[width=0.7\textwidth]{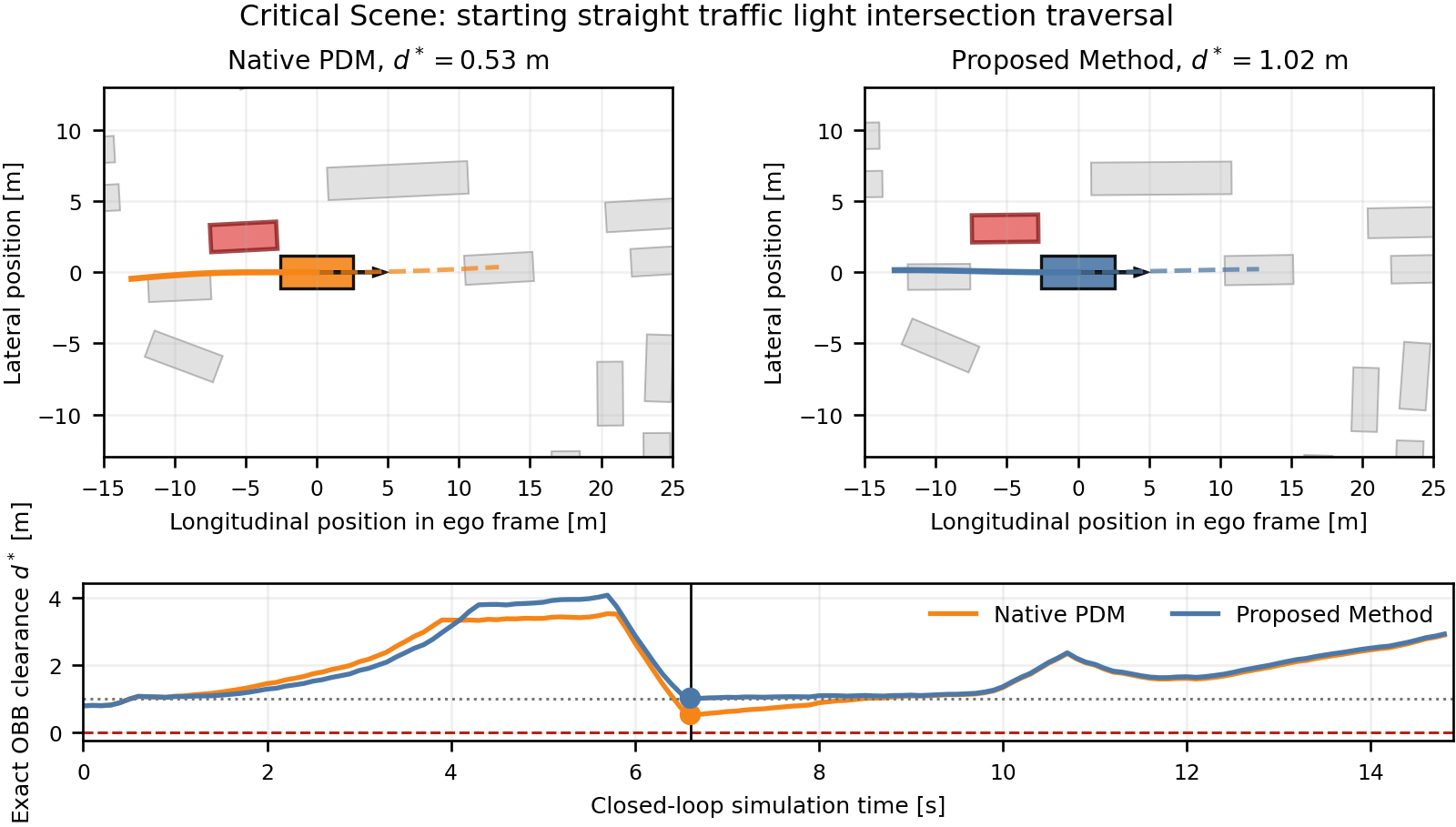}
\vspace{-11pt}
\caption{Illustrative paired CLS-NR traffic-light traversal. The top panels compare native PDM and the proposed method at the native PDM minimum safety clearance; the proposed trajectory has already diverged following an earlier intervention. At the illustrated frame, the exact OBB safety clearance is $0.53$~m for native PDM and $1.02$~m for the proposed method.}
\label{fig:traffic-light-case}
\end{figure*}
\subsection{Offline Comparison with PDM Native Selection}
This offline comparison examines how often the proposed method and PDM's native selector~\cite{dauner2023pdm} choose different trajectories from the same 15-proposal bank. This selection does not aim to reproduce PDM's native composite objective; instead, it is based on the safety clearance plan time statistic defined in Section~\ref{sec:riskaware}.

The certificate is a bound and acceptance rule, so there is no intrinsic proposal to compare with PDM. To quantify the difference between the two selection procedures, the selected trajectory index is compared with PDM's native selected index on the same 15-proposal bank. In the 300 session offline benchmark, their indices differ in 271/300 (90.3\%) nominal, 268/300 (89.3\%) CVaR .30, and 267/300 (89.0\%) CVaR .10 queries.  For nominal selection the city specific rates are 95\%, 79\%, and 97\% in Boston, Pittsburgh, and Singapore; for CVaR .10 they are 93\%, 79\%, and 95\%.  This high disagreement is expected because the proposed method uses a safety clearance gate followed by progress and effort, whereas native PDM maximizes its own composite score. It measures objective difference, not PDM error and not certificate invalidity.
\vspace{-4pt}

\subsection{Closed-Loop Integration Diagnostic}

The proposed method is further evaluated in closed-loop simulation as a lightweight intervention on PDM-Closed. At each replanning step, the native PDM proposal is retained when its predicted margin is nonnegative. Otherwise, the highest score alternative with a nonnegative margin is selected. If no such alternative exists, the native proposal is retained and the margin violation is recorded. The evaluation contains 50 paired CLS-NR scenarios from 30 drive sessions.

Across 5409 replanning windows, the native PDM proposal is retained in 5043 windows ($93.2\%$), and 27 of the 50 scenarios require no override. The mean official nuPlan score is 0.9403 for the proposed method and 0.9437 for native PDM. Aggregate collision, drivable-area, TTC, and comfort metrics remain unchanged, while the small score reduction is primarily associated with route progress. Rresults indicate that the safety clearance based intervention can be incorporated with limited disruption to the native planner behavior.

Figures~\ref{fig:pedestrian-case} and~\ref{fig:traffic-light-case} provide two closed loop examples. In the pedestrian scene, an intervention increases the illustrated exact OBB safety clearance from $0.73$ to $1.12$~m, while both rollouts retain the same official scenario score. In the traffic-light scene, an earlier intervention changes the subsequent trajectory and increases the illustrated spacing from $0.53$ to $1.02$~m. The corresponding full rollout minimum safety clearances are $0.533$ and $0.795$~m for native PDM and the proposed method, respectively. These examples illustrate how the selection rule can alter realized spacing, but they are qualitative diagnostics rather than certificate validity evidence.

Table~\ref{tab:closed_loop} reports executed OBB safety clearance against all simulated dynamic agents. Compared with native PDM, the proposed method has a mean minimum safety clearance of 1.238 versus 1.223~m and below 1~m exposure of 0.2377 versus 0.2497. Paired bootstrap intervals over the 30 drive session clusters generally include zero, so these differences are treated as descriptive integration diagnostics rather than evidence of improved closed loop safety.

The closed-loop variant uses an uncalibrated zero-margin gate and is therefore not covered by the conformal guarantee in Theorem~\ref{thm:exact_clearance}. Extending the guarantee to closed-loop operation would require calibration of the complete frozen rollout policy over disjoint drive sessions while accounting for dependent replanning windows within each session. No closed-loop conformal guarantee is claimed in this study.

\begin{table}[h]
\caption{Safety clearance diagnostic for 50 scenarios from 30 drive sessions. Exposure denotes the fraction of frames below the indicated exact OBB safety clearance.}
\label{tab:closed_loop}
\vspace{-10pt}
\centering
\footnotesize
\begin{tabular}{lrrr}
\toprule
Metric & Native PDM & Proposed & Difference \\
\midrule
Minimum clearance [m] & 1.223 & 1.238 & +0.016 \\
5th-percentile clearance [m] & 1.603 & 1.625 & +0.022 \\
Overlap exposure & .0277 & .0255 & -.0023 \\
Exposure below 0.5\,m & .0591 & .0562 & -.0029 \\
Exposure below 1.0\,m & .2497 & .2377 & -.0120 \\
\bottomrule
\end{tabular}
\end{table}
\vspace{-10pt}

\section{Conclusion}

This work presents a safety clearance certification framework for driving trajectories. A deterministic SAT margin connects the conformal certificate to exact OBB safety clearance, while session calibration handles dependence among planning windows. On the nuPlan evaluation, sampled lower tail CVaR produces substantially tighter and more available certificates than nominal prediction while maintaining the target coverage.

The additional city and closed-loop studies are diagnostic rather than separate conformal guarantees. Results show that calibration tightness can depend on the deployment population and that the safety clearance selection rule can be integrated with PDM Closed with limited intervention to its native behavior. The guarantee remains marginal under the stated exchangeability and eligibility assumptions; extending it to closed-loop deployment requires calibration of the complete frozen rollout policy.
\vspace{-8pt}

\bibliographystyle{IEEEtran}
\bibliography{references}

@inproceedings{dauner2023pdm,
  author    = {Daniel Dauner and Marcel Hallgarten and Andreas Geiger and Kashyap Chitta},
  title     = {Parting with Misconceptions about Learning-Based Vehicle Motion Planning},
  booktitle = {Proceedings of the Conference on Robot Learning},
  year      = {2023}
}

@InProceedings{fan2025riskaware,
author="Fan, Yixuan
and Li, Yali
and Wang, Shengjin",
editor="Leonardis, Ale{\v{s}}
and Ricci, Elisa
and Roth, Stefan
and Russakovsky, Olga
and Sattler, Torsten
and Varol, G{\"u}l",
title="Risk-Aware Self-consistent Imitation Learning for Trajectory Planning in Autonomous Driving",
booktitle="Computer Vision -- ECCV 2024",
year="2025",
publisher="Springer Nature Switzerland",
address="Cham",
pages="270--287",
}

@misc{zhang2025carplanner,
      title={CarPlanner: Consistent Auto-regressive Trajectory Planning for Large-scale Reinforcement Learning in Autonomous Driving}, 
      author={Dongkun Zhang and Jiaming Liang and Ke Guo and Sha Lu and Qi Wang and Rong Xiong and Zhenwei Miao and Yue Wang},
      year={2025},
      eprint={2502.19908},
      archivePrefix={arXiv},
      primaryClass={cs.RO},
      url={https://arxiv.org/abs/2502.19908}, 
}

@misc{song2025momad,
      title={Don't Shake the Wheel: Momentum-Aware Planning in End-to-End Autonomous Driving}, 
      author={Ziying Song and Caiyan Jia and Lin Liu and Hongyu Pan and Yongchang Zhang and Junming Wang and Xingyu Zhang and Shaoqing Xu and Lei Yang and Yadan Luo},
      year={2025},
      eprint={2503.03125},
      archivePrefix={arXiv},
      primaryClass={cs.RO},
      url={https://arxiv.org/abs/2503.03125}, 
}

@misc{fan2025sahdrivescenarioawarehybridplanner,
      title={SAH-Drive: A Scenario-Aware Hybrid Planner for Closed-Loop Vehicle Trajectory Generation}, 
      author={Yuqi Fan and Zhiyong Cui and Zhenning Li and Yilong Ren and Haiyang Yu},
      year={2025},
      eprint={2505.24390},
      archivePrefix={arXiv},
      primaryClass={cs.RO},
      url={https://arxiv.org/abs/2505.24390}, 
}

@article{ALLAMAA2024392,
title = {Real-time MPC with Control Barrier Functions for Autonomous Driving using Safety Enhanced Collocation},
journal = {IFAC-PapersOnLine},
volume = {58},
number = {18},
pages = {392-399},
year = {2024},
note = {8th IFAC Conference on Nonlinear Model Predictive Control NMPC 2024},
issn = {2405-8963},
doi = {https://doi.org/10.1016/j.ifacol.2024.09.058},
url = {https://www.sciencedirect.com/science/article/pii/S240589632401437X},
author = {Jean Pierre Allamaa and Panagiotis Patrinos and Toshiyuki Ohtsuka and Tong Duy Son},
}

@misc{lin2025modelbasedpolicyadaptationclosedloop,
      title={Model-Based Policy Adaptation for Closed-Loop End-to-End Autonomous Driving}, 
      author={Haohong Lin and Yunzhi Zhang and Wenhao Ding and Jiajun Wu and Ding Zhao},
      year={2025},
      eprint={2511.21584},
      archivePrefix={arXiv},
      primaryClass={cs.RO},
      url={https://arxiv.org/abs/2511.21584}, 
}

@article{RENGANATHAN2025472,
title = {Experimental Evaluation of Model Predictive Control and Control Barrier Functions for Autonomous Driving Obstacle Avoidance},
journal = {IFAC-PapersOnLine},
volume = {59},
number = {19},
pages = {472-477},
year = {2025},
note = {13th IFAC Symposium on Nonlinear Control Systems NOLCOS 2025},
issn = {2405-8963},
doi = {https://doi.org/10.1016/j.ifacol.2025.11.079},
url = {https://www.sciencedirect.com/science/article/pii/S2405896325017422},
author = {Vishnu Renganathan and Pei Yu Chang and Qadeer Ahmed},
}

@misc{chen2025dynamichighordercontrolbarrier,
      title={Dynamic High-Order Control Barrier Functions with Diffuser for Safety-Critical Trajectory Planning at Signal-Free Intersections}, 
      author={Di Chen and Ruiguo Zhong and Kehua Chen and Zhiwei Shang and Meixin Zhu and Edward Chung},
      year={2025},
      eprint={2412.00162},
      archivePrefix={arXiv},
      primaryClass={cs.RO},
      url={https://arxiv.org/abs/2412.00162}, 
}

@misc{chang2026riskbudgeted,
      title={Risk-Budgeted Control Framework for Balanced Performance and Safety in Autonomous Vehicles}, 
      author={Pei Yu Chang and Vishnu Renganathan and Qadeer Ahmed},
      year={2026},
      eprint={2510.10442},
      archivePrefix={arXiv},
      primaryClass={eess.SY},
      url={https://arxiv.org/abs/2510.10442}, 
}

@article{CHANG2026107114,
title = {Risk aware safe control with multi-modal sensing for dynamic obstacle avoidance},
journal = {Control Engineering Practice},
volume = {175},
pages = {107114},
year = {2026},
issn = {0967-0661},
doi = {https://doi.org/10.1016/j.conengprac.2026.107114},
url = {https://www.sciencedirect.com/science/article/pii/S0967066126003576},
author = {Pei Yu Chang and Qizhe Xu and Vishnu Renganathan and Qadeer Ahmed},
}

@article{Mustafa2025riskaware,
   title={RACP: Risk-Aware Contingency Planning With Multi-Modal Predictions},
   volume={10},
   ISSN={2379-8858},
   url={http://dx.doi.org/10.1109/TIV.2024.3411530},
   DOI={10.1109/tiv.2024.3411530},
   number={1},
   journal={IEEE Transactions on Intelligent Vehicles},
   publisher={Institute of Electrical and Electronics Engineers (IEEE)},
   author={Mustafa, Khaled A. and Ornia, Daniel Jarne and Kober, Jens and Alonso-Mora, Javier},
   year={2025},
   month=Jan, pages={228–243} }

@article{raeesi2025collision,
  author  = {Raeesi, H. and Khosravi, A. and Sarhadi, P.},
  title   = {Collision Avoidance for Autonomous Vehicles Using Reachability-Based Trajectory Planning in Highway Driving},
  journal = {Proceedings of the Institution of Mechanical Engineers, Part D: Journal of Automobile Engineering},
  volume  = {239},
  number  = {4},
  pages   = {1003--1020},
  year    = {2025},
  doi     = {10.1177/09544070231222053}
}

@misc{chakraborty2025safety,
      title={Safety Evaluation of Motion Plans Using Trajectory Predictors as Forward Reachable Set Estimators}, 
      author={Kaustav Chakraborty and Zeyuan Feng and Sushant Veer and Apoorva Sharma and Wenhao Ding and Sever Topan and Boris Ivanovic and Marco Pavone and Somil Bansal},
      year={2025},
      eprint={2507.22389},
      archivePrefix={arXiv},
      primaryClass={cs.RO},
      url={https://arxiv.org/abs/2507.22389}, 
}

@InProceedings{pmlrv283contreras25a,
  title = 	 {Safe, Out-of-Distribution-Adaptive MPC with Conformalized Neural Network Ensembles},
  author =       {Contreras, Jose Leopoldo and Shorinwa, Ola and Schwager, Mac},
  booktitle = 	 {Proceedings of the 7th Annual Learning for Dynamics \&amp; Control Conference},
  pages = 	 {194--207},
  year = 	 {2025},
  editor = 	 {Ozay, Necmiye and Balzano, Laura and Panagou, Dimitra and Abate, Alessandro},
  volume = 	 {283},
  series = 	 {Proceedings of Machine Learning Research},
  month = 	 {04--06 Jun},
  publisher =    {PMLR},
  url = 	 {https://proceedings.mlr.press/v283/contreras25a.html},
}

@misc{doula2025safepathconformalpredictionsafe,
      title={SafePath: Conformal Prediction for Safe LLM-Based Autonomous Navigation}, 
      author={Achref Doula and Max Mühlhäuser and Alejandro Sanchez Guinea},
      year={2025},
      eprint={2505.09427},
      archivePrefix={arXiv},
      primaryClass={cs.LG},
      url={https://arxiv.org/abs/2505.09427}, 
}

@article{karnchanachari2024nuplan,
  author  = {Napat Karnchanachari and Dimitris Geromichalos and Kok Seang Tan and Nanxiang Li and Christopher Eriksen and Shakiba Yaghoubi and Noushin Mehdipour and Gianmarco Bernasconi and Whye Kit Fong and Yiluan Guo and Holger Caesar},
  title   = {Towards Learning-Based Planning: The nuPlan Benchmark for Real-World Autonomous Driving},
  journal = {arXiv preprint arXiv:2403.04133},
  year    = {2024}
}

@book{vovk2005algorithmic,
  author    = {Vladimir Vovk and Alexander Gammerman and Glenn Shafer},
  title     = {Algorithmic Learning in a Random World},
  publisher = {Springer},
  year      = {2005}
}

@article{lindemann2023safe,
  author  = {Lars Lindemann and Matthew Cleaveland and Gihyun Shim and George J. Pappas},
  title   = {Safe Planning in Dynamic Environments Using Conformal Prediction},
  journal = {IEEE Robotics and Automation Letters},
  volume  = {8},
  number  = {8},
  pages   = {5116--5123},
  year    = {2023},
  doi     = {10.1109/LRA.2023.3292071}
}

@inproceedings{dixit2023adaptive,
  author    = {Anushri Dixit and Lars Lindemann and Skylar X. Wei and Matthew Cleaveland and George J. Pappas and Joel W. Burdick},
  title     = {Adaptive Conformal Prediction for Motion Planning among Dynamic Agents},
  booktitle = {Proceedings of the 5th Annual Learning for Dynamics and Control Conference},
  series    = {Proceedings of Machine Learning Research},
  volume    = {211},
  pages     = {300--314},
  year      = {2023}
}

@INPROCEEDINGS{takasugi2024SSAT,
  author={Takasugi, Noriaki and Kinoshita, Masaya and Kamikawa, Yasuhisa and Tsuzaki, Ryoichi and Sakamoto, Atsushi and Kai, Toshimitsu and Kawanami, Yasunori},
  booktitle={2024 IEEE/RSJ International Conference on Intelligent Robots and Systems (IROS)}, 
  title={Real-time Perceptive Motion Control using Control Barrier Functions with Analytical Smoothing for Six-Wheeled-Telescopic-Legged Robot Tachyon 3}, 
  year={2024},
  volume={},
  number={},
  pages={6802-6809},
  doi={10.1109/IROS58592.2024.10802375}}

\end{document}